\let\set\mathbb
\def\<#1>{\langle#1\rangle}
\def\ord{\operatorname{ord}}
\def\Wr{\operatorname{Wr}}
\def\vec#1{\boldsymbol{#1}}
\newtheorem{thm}{Theorem}
\newtheorem{lem}[thm]{Lemma}
\newtheorem{defi}[thm]{Definition}
\newtheorem{ex}[thm]{Example}
\begin{document}

\author[Manuel Kauers]{Manuel Kauers}
\address{Manuel Kauers, Institute for Algebra, J. Kepler University Linz, Austria}
\email{manuel.kauers@jku.at}

\author[Christoph Koutschan]{Christoph Koutschan}
\address{Christoph Koutschan, RICAM, Austrian Academy of Sciences, Linz, Austria}
\email{christoph.koutschan@ricam.oeaw.ac.at}

\author[Thibaut Verron]{Thibaut Verron}
\address{Thibaut Verron, Institute for Algebra, J. Kepler University Linz, Austria}
\email{thibaut.verron@jku.at}

\thanks{M.\ Kauers was supported by the Austrian FWF grants 10.55776/PAT8258123, 10.55776/PAT9952223,
  and 10.55776/I6130.
  C.\ Koutschan was supported by the Austrian FWF grant 10.55776/I6130.
T.\ Verron was supported by the Austrian FWF grant 10.55776/P34872.}  

\title{A Shape Lemma for Ideals of Differential Operators}

\begin{abstract}
  We propose a version of the classical shape lemma for zero-dimensional
  ideals of a commutative multivariate polynomial ring to
  the noncommutative setting of zero-dimensional ideals in an algebra of
  differential operators.
\end{abstract}

\maketitle

\section{Introduction}

In the classical theory of Gr\"obner bases for commutative polynomial
rings~\cite{buchberger65,cox92,becker93,cox05,buchberger10},
the shape lemma makes a statement about the form of the Gr\"obner basis with
respect to a lexicographic term order of an ideal of dimension zero.
It was proposed by Gianni and Mora~\cite{gianni89}, and it is almost obvious.

Consider an ideal $I\subseteq K[x,y]$ in a commutative polynomial ring over
a perfect field~$K$. The ideal has dimension zero if and only if the corresponding
algebraic set
\[
V(I)=\{\,(\xi,\eta)\in \bar K^2\mid\forall\ p\in I: p(\xi,\eta)=0\,\}
\]
is finite. Here, $\bar K$ denotes the algebraic closure of~$K$.

The finitely many points in $V(I)$ have only finitely many distinct $x$-coordinates,
and if $p$ is a generator of the elimination ideal $I\cap K[x]$, then the roots of
$p$ are precisely these $x$-coordinates.
The shape lemma says that usually there is another polynomial $q\in K[x]$ with
$\deg(q)<\deg(p)$ such that $I$ is generated by $\{y-q,p\}$.
This $q$ is then the interpolating polynomial of the points in~$V(I)$.

There may be no ideal basis of the required form if $V(I)$ contains two distinct
points with the same $x$-coordinate. The ideal is said to be \emph{in normal position}
(w.r.t.~$x$) if this is not the case, i.e., if any two distinct elements of $V(I)$
have distinct $x$-coordinates. If $K$ is sufficiently large, then every ideal $I$
of dimension zero can be brought into normal position by applying a change of variables.

\begin{thm}\label{thm:1} (cf. Prop. 3.7.22 in~\cite{kreuzer00}). Let $P=K[x,y_1,\dots,y_n]$,
  let $I\subseteq P$ be an ideal of dimension zero,
  let $t=\dim_K P/I$, and suppose that $|K|>\binom t2$. Then
  there are constants $c_1,\dots,c_n\in K$ such that mapping $x$ to
  $x+c_1y_1+c_2y_2+\cdots+c_ny_n$ (and each $y_i$ to itself)
  transforms $I$ into an ideal in normal position w.r.t.~$x$.
\end{thm}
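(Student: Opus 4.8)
The plan is to realize the substitution as the $K$-algebra automorphism $\phi_{\vec c}$ of $P$ sending $x$ to $x+c_1y_1+\dots+c_ny_n$ and each $y_i$ to itself, and to track its effect on the variety. For a point $(\xi,\eta_1,\dots,\eta_n)\in\bar K^{n+1}$ one checks directly from $\phi_{\vec c}(f)=f(x+\sum_i c_iy_i,y_1,\dots,y_n)$ that it lies in $V(\phi_{\vec c}(I))$ exactly when $(\xi+\sum_i c_i\eta_i,\eta_1,\dots,\eta_n)\in V(I)$. Hence $\phi_{\vec c}$ induces a bijection $V(I)\to V(\phi_{\vec c}(I))$ that fixes all $y$-coordinates and replaces the $x$-coordinate $\xi$ by $\xi-\sum_i c_i\eta_i$. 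Consequently $\phi_{\vec c}(I)$ is in normal position w.r.t.~$x$ if and only if the values $\xi-\sum_i c_i\eta_i$ are pairwise distinct as $(\xi,\eta_1,\dots,\eta_n)$ ranges over the distinct points of $V(I)$.

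Two distinct points of $V(I)$ that agree in all $y$-coordinates already differ in $x$, and they stay separated after subtracting the common quantity $\sum_i c_i\eta_i$; such pairs impose no condition on $\vec c$. The only pairs that can fail to be separated are the \emph{dangerous} ones, i.e. distinct $p=(\xi,\eta_1,\dots,\eta_n)$ and $q=(\xi',\eta_1',\dots,\eta_n')$ in $V(I)$ with $(\eta_1,\dots,\eta_n)\ne(\eta_1',\dots,\eta_n')$. For such a pair the separation condition reads
\[
(\xi-\xi')-\sum_{i=1}^n c_i(\eta_i-\eta_i')\ne 0,
\]
a single nonzero affine-linear equation in the unknowns $c_1,\dots,c_n$, nonzero because some $\eta_i-\eta_i'\ne 0$. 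Since $\bar K\otimes_K(P/I)$ is an Artinian $\bar K$-algebra of dimension $t$ whose maximal ideals correspond to the points of $V(I)$, there are at most $t$ such points, hence at most $\binom t2$ dangerous pairs and at most $\binom t2$ forbidden hyperplanes to avoid in $\vec c$-space.

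The delicate point, and the one I would spend the most care on, is that these forbidden hyperplanes are defined over $\bar K$ rather than over $K$, so one cannot simply solve for one coordinate to bound the number of bad $\vec c\in K^n$. Instead I would count the $K$-rational solutions of a single equation $\sum_i\delta_i c_i=\gamma$ with $\delta_i=\eta_i-\eta_i'\in\bar K$ not all zero: the map $K^n\to\bar K$, $\vec c\mapsto\sum_i\delta_i c_i$, is $K$-linear with image the $K$-span of $\delta_1,\dots,\delta_n$, a nonzero $K$-subspace of $\bar K$, so its kernel has $K$-codimension at least $1$; thus the solution set (a coset of this kernel, if nonempty) contains at most $|K|^{n-1}$ points of $K^n$. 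Summing over the at most $\binom t2$ dangerous pairs, the number of forbidden $\vec c\in K^n$ is at most $\binom t2\,|K|^{n-1}<|K|\cdot|K|^{n-1}=|K|^n$ by the hypothesis $|K|>\binom t2$. Therefore some $\vec c\in K^n$ avoids every forbidden hyperplane, and $\phi_{\vec c}$ brings $I$ into normal position. When $K$ is infinite the inequality $|K|>\binom t2$ holds automatically, and the same corank argument shows that the finitely many forbidden sets are proper affine subspaces of $K^n$, which cannot cover $K^n$.
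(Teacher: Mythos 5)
The paper does not prove Theorem~\ref{thm:1}; it is quoted from Prop.~3.7.22 of Kreuzer--Robbiano, so there is no in-paper argument to compare against. Your proof is correct and complete, and it follows the standard line: translate the substitution into the shift $\xi\mapsto\xi-\sum_i c_i\eta_i$ on $V(I)$, observe that only pairs differing in some $y$-coordinate impose a condition, bound $|V(I)|$ by $t$ via the Artinian algebra $\bar K\otimes_K(P/I)$, and avoid at most $\binom t2$ affine-linear conditions. You are also right to flag, and you correctly resolve, the one genuinely delicate point: the forbidden conditions have coefficients in $\bar K$, so you count their $K$-rational solutions via the $K$-linear map $\vec c\mapsto\sum_i\delta_ic_i$ into $\bar K$, whose kernel has $K$-codimension at least one. (The textbook proof instead multiplies the $\binom t2$ linear forms into a single nonzero polynomial of degree at most $\binom t2$ and invokes the fact that such a polynomial cannot vanish on all of $K^n$ when $|K|>\binom t2$; the two countings are equivalent, and yours is if anything more elementary.) Your separate remark for infinite $K$, where the cardinality inequality degenerates but a finite union of proper affine subspaces still cannot cover $K^n$, closes the only remaining case.
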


A basis of the required form may also fail to exist if $I$ is not a radical
ideal. Recall that for a radical ideal~$I$, we have $\dim_K K[x,y]/I=|V(I)|$. Also
recall that if $p$ is a generator of $I\cap K[x]$, then the equivalence classes
$[1],[x],\dots,[x^{\deg p-1}]$ are linearly independent over $K$
and $[1],[x],\dots,[x^{\deg p-1}],[x^{\deg p}]$ are linearly dependent.
Therefore, the following result is quite natural.

\begin{thm}\label{thm:2} (cf. Thm. 3.7.23 in~\cite{kreuzer00}). Let $P=K[x,y_1,\dots,y_n]$
  and let $I\subseteq P$ be a radical ideal of dimension zero.
  Let $p$ be a generator of $I\cap K[x]$. Then
  the following conditions are equivalent:
  \begin{enumerate}
  \item $I$ is in normal position w.r.t. $x$
  \item $\deg p=\dim_K P/I$
  \item $K[x]/\<p>$ and $P/I$ are isomorphic as $K$-algebras.
  \end{enumerate}
\end{thm}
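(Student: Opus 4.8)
The plan is to prove the chain of equivalences by translating both nontrivial conditions into numerical data attached to $V(I)$, relying on the two facts recalled just before the statement: that $\dim_K P/I = |V(I)|$ for a radical zero-dimensional ideal, and that the roots of a generator of $I\cap K[x]$ are exactly the $x$-coordinates of the points of $V(I)$. The cleanest route is to show (2)$\Leftrightarrow$(3) by pure linear algebra, and then (1)$\Leftrightarrow$(2) by counting.

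First I would treat (2)$\Leftrightarrow$(3). The inclusion $K[x]\hookrightarrow P$ followed by the quotient map $P\to P/I$ is a $K$-algebra homomorphism whose kernel is $\{f\in K[x]: f\in I\}=I\cap K[x]=\langle p\rangle$. Hence it induces an \emph{injective} $K$-algebra homomorphism $\phi\colon K[x]/\langle p\rangle\to P/I$. Since $\dim_K K[x]/\langle p\rangle=\deg p$, an injection of finite-dimensional $K$-algebras, the map $\phi$ is an isomorphism exactly when $\deg p=\dim_K P/I$. This already gives (2)$\Rightarrow$(3). Conversely, any $K$-algebra isomorphism forces $\dim_K K[x]/\langle p\rangle=\dim_K P/I$, i.e.\ $\deg p=\dim_K P/I$, so the abstract-isomorphism phrasing of (3) still yields (2); the point is simply that isomorphic finite-dimensional $K$-algebras have equal $K$-dimension.

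Next I would establish (1)$\Leftrightarrow$(2). Because $I$ is radical, its contraction $I\cap K[x]$ is a radical ideal in the principal ideal domain $K[x]$, so its generator $p$ is squarefree; consequently $\deg p$ equals the number of distinct roots of $p$ in $\bar K$. By the recalled fact these roots are precisely the distinct $x$-coordinates occurring among the points of $V(I)$. On the other side, $\dim_K P/I=|V(I)|$ counts the points of $V(I)$ themselves. Therefore $\deg p=\dim_K P/I$ holds if and only if the number of distinct $x$-coordinates equals the total number of points, i.e.\ if and only if the projection $(\xi,\eta_1,\dots,\eta_n)\mapsto\xi$ is injective on $V(I)$. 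That injectivity is exactly the definition of normal position w.r.t.\ $x$, giving (1)$\Leftrightarrow$(2) and closing the loop.

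The main obstacle is not any single deep step but rather pinning down the two counting facts in the correct generality. I expect the delicate points to be verifying that $p$ is squarefree (so that $\deg p$ genuinely counts \emph{distinct} $x$-coordinates rather than coordinates with multiplicity) and that $\dim_K P/I=|V(I)|$ is applied in the right field setting; both are granted by the recalled results, so the remaining work is just to assemble them cleanly. A secondary care point is the abstract wording of (3): I must argue that mere existence of some $K$-algebra isomorphism already pins the dimension, and then feed this back through $\phi$, rather than assuming the isomorphism is the natural one.
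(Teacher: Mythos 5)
The paper does not prove this statement: Theorem~\ref{thm:2} is quoted as classical background with a pointer to Thm.~3.7.23 of~\cite{kreuzer00}, so there is no in-paper proof to compare against. Your argument is correct and is essentially the standard one: (2)$\Leftrightarrow$(3) via the canonical injection $K[x]/\<p>\hookrightarrow P/I$ together with the observation that any $K$-algebra isomorphism forces equality of dimensions, and (1)$\Leftrightarrow$(2) by comparing the count of distinct $x$-coordinates (the roots of the squarefree generator $p$ of the radical elimination ideal $I\cap K[x]$) with $|V(I)|=\dim_K P/I$. The only points requiring the standing hypothesis that $K$ is perfect are the two counting facts you lean on --- that a squarefree $p$ is separable, so $\deg p$ really counts distinct roots in $\bar K$, and that $\dim_K P/I=|V(I)|$ for radical $I$ --- and you correctly identify these as the delicate steps; with the paper's assumptions they hold, and the proof is complete.
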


Finally, the shape lemma can be stated as follows.

\begin{thm}\label{thm:3} (Shape Lemma; cf. Thm. 3.7.25 in~\cite{kreuzer00}) Let $P=K[x,y_1,\dots,y_n]$
  and let $I\subseteq P$ be a radical ideal of dimension zero that is in normal position
  w.r.t.~$x$. Let $p$ be a generator of $I\cap K[x]$. Then
  there are polynomials $q_1,\dots,q_{n-1}\in K[x]$ with $\deg(q_i)<\deg(p)$
  for all $i$ such that
  $\{y_1-q_1,\dots,y_n-q_n, p\}$ is a basis of~$I$.
\end{thm}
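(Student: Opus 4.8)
The plan is to leverage Theorem~\ref{thm:2} directly. Since $I$ is radical, zero-dimensional, and in normal position w.r.t.~$x$, condition~(3) of that theorem provides a $K$-algebra isomorphism $K[x]/\<p>\cong P/I$, induced by sending $x$ to its residue class, while condition~(2) gives $d:=\deg p=\dim_K P/I$. In particular, the residue classes $[1],[x],\dots,[x^{d-1}]$ form a $K$-basis of $P/I$.

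First I would use the surjectivity of the evaluation map $K[x]\to P/I$, $x\mapsto[x]$, to represent the generators $y_i$. For each $i$ the class $[y_i]\in P/I$ lies in its image, so there is a polynomial $q_i\in K[x]$ with $[y_i]=[q_i]$, and reducing $q_i$ modulo $p$ lets me take $\deg q_i<d$. By construction $y_i-q_i\in I$ for every $i$, and $p\in I\cap K[x]\subseteq I$; this assembles the candidate basis $\{y_1-q_1,\dots,y_n-q_n,p\}$.

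The remaining task is to show that $J:=\<y_1-q_1,\dots,y_n-q_n,p>$ equals $I$, and not merely that $J\subseteq I$. Here I would argue by a dimension count. In $P/J$ every $y_i$ is congruent to $q_i(x)$, so the evaluation $K[x]\to P/J$, $x\mapsto[x]$, is surjective and, because $p\in J$, factors through $K[x]/\<p>$; hence $\dim_K P/J\le d$. Conversely, the inclusion $J\subseteq I$ gives a surjection $P/J\twoheadrightarrow P/I$, whence $\dim_K P/J\ge\dim_K P/I=d$. Therefore $\dim_K P/J=d$, the surjection $P/J\to P/I$ is an isomorphism of equidimensional spaces, and $J=I$.

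The reasoning is almost entirely formal once Theorem~\ref{thm:2} is in hand. The one conceptual subtlety---producing interpolating polynomials with coefficients in $K$ rather than in $\bar K$---is absorbed by invoking the $K$-algebra isomorphism of part~(3) instead of interpolating the $\bar K$-points of $V(I)$ directly, so I expect no genuine obstacle there. The real content is thus the upgrade from $J\subseteq I$ to $J=I$, which the dimension count supplies.
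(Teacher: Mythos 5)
Your argument is correct and complete. The paper itself does not prove Theorem~\ref{thm:3} (it only cites Thm.~3.7.25 of~\cite{kreuzer00}), so there is no in-paper proof to compare against; but your route is the standard one, and it is essentially the same argument the authors later use for the implication $2\Rightarrow 4$ (and $4\Rightarrow 2$) of their differential analog, Theorem~\ref{thm:our23}. The key steps all check out: the natural map $K[x]/\<p>\to P/I$ is injective because $I\cap K[x]=\<p>$, so condition~(2) of Theorem~\ref{thm:2} makes $[1],[x],\dots,[x^{d-1}]$ a $K$-basis of $P/I$, which yields the $q_i$ with $\deg q_i<d$ and $y_i-q_i\in I$; and the two-sided dimension bound $d\le\dim_K P/J\le d$ upgrades $J\subseteq I$ to $J=I$. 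One small observation worth making explicit: your proof only uses condition~(2) of Theorem~\ref{thm:2}, so radicality and normal position enter solely to guarantee $\deg p=\dim_K P/I$; this matches the structure of the noncommutative version, where the shape basis is derived from the order condition alone without any radicality hypothesis.
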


Here and elsewhere, by a ``basis'' of an ideal we understand just a set of generators, not
necessarily minimal or independent in any sense.

The purpose of this note is to extend these well-known facts from commutative polynomial
rings to rings of differential operators. This is motivated by recent developments in the
area of symbolic integration for so-called D-finite functions~\cite{kauers23c}. Given such a
function $f(x,y)$, the goal is to evaluate a definite integral
\[
  F(x) = \int_\Omega f(x,y)\,dy.
\]
More precisely, given an ideal of annihilating operators for~$f(x,y)$, we want to compute
an ideal of annihilating operators for the integral~$F(x)$. A general approach to this
problem is known as creative telescoping~\cite{zeilberger90,zeilberger91,koutschan13,chyzak14}
and has been subject of intensive research
during the past decades. There are several algorithms for creative telescoping, some of
which assume that the ideal of operators for $f(x,y)$ has a basis of the form $\{D_y-M,L\}$,
where $M$ and $L$ are operators in $D_x$ only. Thanks to the shape lemma, this is a fair
assumption.

The technique of creative telescoping also applies to summation problems. In this case, we
have to deal with recurrence operators rather than differential operators. It would be
interesting to have a version of the shape lemma also in this case. However, our results
for the differential case do not seem to extend easily to the recurrence case.

\section{Differential operators}

The role of the field~$K$ in the commutative setting sketched
in the introduction is now taken over by the field $C(x,y_1,\dots,y_n)$
of rational functions in~$x$ and~$y_1,\dots,y_n$, with coefficients in some
constant field~$C$ that we assume to have characteristic zero.
Hence from now on, we let $K=C(x,y_1,\dots,y_n)$. We will also abbreviate
$y_1,\dots,y_n$ by~$\vec y$.

We use the symbols $D_x$ and $D_{y_1},\dots,D_{y_n}$ to denote the partial derivation
operators, i.e., $D_x(f)=\frac{\partial f}{\partial x}$ and
$D_{y_i}(f)=\frac{\partial f}{\partial y_i}$ ($i=1,\dots,n$).
Note that $D_x(c)=D_{y_i}(c)=0$ for all $c\in C$ and all~$i$.

The action of $D_x$ and $D_{y_1},\dots,D_{y_n}$ turns $K$ into a partial differential field.
In general, if $L$ is a field, a map $D\colon L\to L$ is called a \emph{derivation} if
$D(a+b)=D(a)+D(b)$ and $D(ab)=D(a)b+aD(b)$ for all $a,b\in L$.
We call $L$ a partial differential extension field of $K$ if it is equipped with $n+1$
derivations that agree with the action of $D_x$ and $D_{y_1},\dots,D_{y_n}$ on the subfield~$K$.

Let $K[D_x,\vec{D_y}]:=K[D_x,D_{y_1},\dots,D_{y_n}]$ denote the ring of linear
differential operators with rational function coefficients, i.e.,
\[
  K[D_x,\vec{D_y}] = \biggl\{ \sum_{i,j_1,\dots,j_n=0}^d a_{i,j_1,\dots,j_n}(x,\vec y) D_x^i D_{y_1}^{j_1}\cdots D_{y_n}^{j_n}
  \mathrel{\bigg|} d\in\set N, a_{i,j_1,\dots,j_n} \in K \biggr\}.
\]
Because of the product rule, we have the commutation rules
$D_x\cdot x=x\cdot D_x+1$ and $D_{y_i}\cdot y=y\cdot D_{y_i}+1$ for every~$i$, so the
ring $K[D_x,\vec{D_y}]$ is non-commutative. A linear partial differential
equation can then be written as $L(f)=0$ with $L\in K[D_x,\vec{D_y}]$.

Let $C[[x]]$ and $C[[x,\vec y]]$ denote, as usual, the rings of univariate
and multivariate formal power series with coefficients in~$C$, and let $C((x))$ and
$C((x,\vec y))$ denote their respective quotient fields.

Let
$L=\sum_{i=0}^r a_i(x)D_x^i \in C(x)[D_x]$ be a linear ordinary differential
operator. An element $x_0\in C$ is called a regular point (or ordinary point)
of~$L$ if $a_r(x_0)\neq0$ and $a_i(x_0)$ is defined for all $0\leq i\leq r$, i.e., if no
coefficient~$a_i$ has a pole at~$x_0$. Via the change of variables
$x\leftarrow x-x_0$ the point~$x_0$ can be moved to the origin. Hence, without
loss of generality, assume that $0$ is a regular point of~$L$. Then the set of
power series solutions
\[
  V(L)=\{ f \in C[[x]] \mid L(f)=0 \}
\]
forms a $C$-vector space of dimension~$r$.

For a power series~$f(x,\vec y)\in C[[x,\vec y]]$, we define the ($K[D_x,\vec{D_y}]$-)
annihilator of~$f$ as the set of all operators that annihilate~$f$,
that is $\{L\in K[D_x,\vec{D_y}] \mid L(f)=0\}$. It is easily verified
that this set forms a (left) ideal in $K[D_x,\vec{D_y}]$. The series~$f$ is
called D-finite if $\dim_K(K[D_x,\vec{D_y}]/I)<\infty$, where $I$ denotes
the annihilator of~$f$. Equivalently, $f$ is called D-finite if $I$
is an ideal of dimension zero.

Also in the multivariate setting we can make
a similar statement about the dimension of the solution space, which
directly follows from Thm.~3.7 in~\cite{chen18}.

\begin{thm}\label{thm:solspace}
  Let $I$ be a zero-dimensional left ideal of $K[D_x,\vec{D_y}]$ 
  and $r=\dim_K(K[D_x,\vec{D_y}]/I)\in\set N$.
  If $0$ is an ordinary point of~$I$, then the set
  \[
  V(I)=\{f\in C[[x,\vec y]]\mid\forall L\in I:L(f)=0\}
  \]
  is a $C$-vector space of dimension~$r$.
\end{thm}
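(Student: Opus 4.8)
The plan is to translate the zero-dimensionality into a first-order linear system and then apply a formal Cauchy--Kovalevskaya (Frobenius) argument, with the genuinely delicate integrability step handled by the cited result.

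First I would fix a term order on $K[D_x,\vec{D_y}]$ and a Gr\"obner basis $G$ of $I$, and let $b_1=1,b_2,\dots,b_r$ denote the standard monomials, i.e.\ the monomials in $D_x,D_{y_1},\dots,D_{y_n}$ not divisible by any leading monomial of~$G$. By zero-dimensionality their residue classes form a $K$-basis of $M:=K[D_x,\vec{D_y}]/I$. Reducing $D_x b_j$ and each $D_{y_i} b_j$ modulo $G$ expresses multiplication by the derivations on this basis through matrices $A_x,A_{y_1},\dots,A_{y_n}\in K^{r\times r}$. Since $D_x$ and the $D_{y_i}$ commute in~$M$, these matrices satisfy the flatness conditions $\pfrac{A_{y_i}}{x}+A_{y_i}A_x=\pfrac{A_x}{y_i}+A_x A_{y_i}$ for every $i$ (and analogously among the $A_{y_i}$). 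I would read the ordinary-point hypothesis at $0$ as the statement that all entries of $A_x,A_{y_1},\dots,A_{y_n}$ are regular at the origin, i.e.\ lie in $C[[x,\vec y]]$; this is the natural multivariate analogue of the ODE definition and corresponds to the leading coefficients $\lc(g)$, $g\in G$, being pole-free and nonzero at~$0$.

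Next I would identify $V(I)$ with the solution space of this system. A series $f\in V(I)$ is the same datum as a $K[D_x,\vec{D_y}]$-linear map $M\to C[[x,\vec y]]$ sending $b_1=1$ to~$f$; writing $Y_f=(b_1(f),\dots,b_r(f))^{\!\top}$, linearity yields $D_x Y_f=A_x Y_f$ and $D_{y_i}Y_f=A_{y_i}Y_f$. I then define the $C$-linear evaluation map $\phi\colon V(I)\to C^r$, $f\mapsto Y_f(0)$, whose first component is $f(0)$, and claim it is an isomorphism, which gives $\dim_C V(I)=r$. Injectivity is the easy half: reducing an arbitrary monomial $D_x^i D_{y_1}^{j_1}\cdots D_{y_n}^{j_n}$ modulo $G$ writes it as a $K$-combination of the $b_k$ with coefficients regular at~$0$, so every Taylor coefficient of $f$ is determined recursively from the entries of $Y_f(0)$; hence $\phi(f)=0$ forces $f=0$.

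The main obstacle is surjectivity, i.e.\ producing, for each prescribed $v\in C^r$, an actual power-series solution. Here I would invoke the formal Frobenius / Cauchy--Kovalevskaya theorem for integrable first-order systems with coefficients regular at~$0$: one builds $Y$ coefficient by coefficient, the ordinary-point hypothesis ensuring that the leading coefficient of the recurrence is invertible so that each new Taylor coefficient is uniquely determined, and the flatness conditions ensuring that the construction is consistent, i.e.\ independent of the order in which the derivations are applied. Setting $f:=Y_1$ then produces an element of $V(I)$ with $\phi(f)=v$, provided one checks that $f$ is annihilated by all of $I$ and not merely by~$G$. This consistency --- that the recurrences attached to the several generators never conflict --- is exactly the subtle point, and it is precisely what Thm.~3.7 of~\cite{chen18} supplies; accordingly I would reduce surjectivity, and with it the whole statement, to a citation of that result.
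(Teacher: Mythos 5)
The paper offers no proof of this statement at all: it simply asserts that the theorem ``directly follows from Thm.~3.7 in~\cite{chen18}'' and even declines to reproduce that paper's definition of an ordinary point. Your reconstruction --- passing to the integrable first-order system on the standard monomials, proving injectivity of the evaluation map $f\mapsto Y_f(0)$ directly, and deferring the existence/consistency half to the same citation --- is the standard argument and is fully consistent with the paper's treatment; the only caveat is that your reading of ``ordinary point'' as regularity of the multiplication matrices at the origin is a plausible guess rather than the definition actually used in~\cite{chen18}, which the paper deliberately leaves unstated.
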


The definition of ordinary points proposed in \cite{chen18} is a bit more
complicated than the definition in the univariate case. We won't need it
here, so we do not reproduce it. It suffices to know that almost every
point is ordinary, so if $0$ is not a ordinary point, we always have
the option to get into the situation of Thm.~\ref{thm:solspace} by making
a change of variables.

For $f_1,\dots,f_r\in C((x,\vec y))$, their
\emph{Wronskian} (with respect to the variable~$x$) is denoted and
defined as follows:
\[
  \Wr_x(f_1,\dots,f_r) = \det
  \begin{pmatrix}
    f_1 & f_2 & \cdots & f_r \\
    D_x(f_1) & D_x(f_2) & \cdots & D_x(f_r) \\
    \vdots & \vdots & \ddots & \vdots \\
    D_x^{r-1}(f_1) & D_x^{r-1}(f_2) & \cdots & D_x^{r-1}(f_r)
  \end{pmatrix}.
\]
The Wronskian $\Wr_x(f_1,\dots,f_r)$ is equal to zero if and only
if the $f_i$ satisfy a linear relation with coefficients that do
not depend on~$x$, e.g., if $\sum_{i=1}^r a_i f_i = 0$ with
$a_i\in C((\vec y))$ not all zero~\cite{bocher1901}.

For later use, we state the following lemma.

\begin{lem}\label{lem:elim}
  If $L$ is a partial differential field extension of $K$ and $I$ is an ideal in $L[D_x,\vec{D_y}]$
  which has a basis in $K[D_x,\vec{D_y}]$, then also the elimination ideal
  $I \cap L[D_x]$ has a basis in~$K[D_x]$.
\end{lem}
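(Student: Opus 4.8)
The plan is to reduce the statement to two standard facts about Gröbner bases in Ore algebras: the stability of a Gröbner basis under extension of the coefficient field, and the elimination theorem. Fix an admissible monomial order $\prec$ on the monomials $D_x^iD_{y_1}^{j_1}\cdots D_{y_n}^{j_n}$ that eliminates $D_{y_1},\dots,D_{y_n}$, meaning that every monomial containing some $D_{y_k}$ is $\prec$-larger than every monomial in $D_x$ alone. Since the $D$'s commute among themselves and multiplying a $D$-monomial by a scalar $a\in L$ only adds terms of lower order in the $D$'s (because $D_x\cdot a=aD_x+D_x(a)$, and similarly for the $D_{y_k}$), the ring $L[D_x,\vec{D_y}]$ is an Ore algebra over $L$ in the variables $D_x,\vec{D_y}$: the leading monomial of a product is the product of the leading monomials, so Buchberger's theory applies, and so does the elimination theorem with respect to $\prec$.

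Let $B\subseteq K[D_x,\vec{D_y}]$ be a basis of $I$, and let $J$ be the left ideal generated by $B$ inside $K[D_x,\vec{D_y}]$. Running Buchberger's algorithm on $B$ over the ground field $K$ yields a Gröbner basis $G$ of $J$ with $G\subseteq K[D_x,\vec{D_y}]$, because every S-polynomial and every reduction step uses only the field operations of $K$ (leading coefficients lie in $K^\times$, so the cofactors $1/\lc(\cdot)$ stay in $K$) together with the fixed commutation rules, which introduce only coefficients from $K$. The first key step is to show that this same $G$ is already a Gröbner basis of $I=L[D_x,\vec{D_y}]\cdot B$ over the larger field $L$. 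Indeed, $G$ generates $I$ over $L$ since it generates $J$ over $K$; and as the leading coefficients of the elements of $G$ lie in $K^\times\subseteq L^\times$, their leading monomials are unchanged when $G$ is viewed over $L$. By Buchberger's criterion it then suffices to check that every S-polynomial of a pair from $G$ reduces to zero modulo $G$ over $L$. Each such reduction already succeeds over $K$, and since all coefficients occurring in it lie in $K$, the identical reduction sequence is valid over $L$ and again yields remainder $0$.

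It remains to invoke the elimination theorem: since $G$ is a Gröbner basis of $I$ over $L$ for the elimination order $\prec$, the subset $\{g\in G : \mathrm{LM}(g)\in L[D_x]\}$ is a Gröbner basis, and in particular a basis, of $I\cap L[D_x]$. If the leading monomial of some $g\in G$ is a pure power of $D_x$, then every monomial of $g$ is $\prec$-smaller and hence also a pure power of $D_x$, so $g\in L[D_x]$. Because every element of $G$ already lies in $K[D_x,\vec{D_y}]$, the condition $g\in L[D_x]$ is equivalent to $g\in K[D_x]$, and therefore $G\cap K[D_x]\subseteq K[D_x]$ is a basis of $I\cap L[D_x]$, as required.

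The main obstacle is the base-change step in the second paragraph: verifying that the leading-term ideal, and hence the Gröbner basis, does not change under the field extension $K\subseteq L$. Everything hinges on the fact that Buchberger's algorithm is \emph{rational} in the coefficients — it never leaves the field generated by its input — and that the commutation rules of $L[D_x,\vec{D_y}]$ restrict exactly to those of $K[D_x,\vec{D_y}]$, precisely because the derivations on $L$ extend those on $K$. Once this invariance is in place, the elimination step is entirely formal and parallels the commutative case.
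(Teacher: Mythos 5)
Your proof is correct, but it is not the route the paper takes as its primary argument. The paper's own proof is elementary linear algebra: writing $M=Q_1P_1+\cdots+Q_mP_m$ with $Q_i\in L[D_x,\vec{D_y}]$, it places all coefficients of the $Q_i$ in a finite-dimensional $K$-subspace of $L$ with basis $B_1,\dots,B_d$, regroups the sum as $\sum_j\bigl(\sum_i Q_{i,j}P_i\bigr)B_j$ with $Q_{i,j}\in K[D_x,\vec{D_y}]$, and concludes from the $K$-linear independence of the $B_j$ that each inner sum is free of $D_{y_1},\dots,D_{y_n}$, hence lies in $K[D_x]$. That argument needs no term orders, no Buchberger criterion, and no elimination theorem for Ore algebras. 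Your argument is, however, exactly the \emph{alternative} proof that the paper sketches in one sentence immediately after its main proof (``Buchberger's algorithm never extends the ground field, [so] the resulting basis must be a subset of $K[D_x]$''); you have filled in the two points that sketch leaves implicit, namely that a Gr\"obner basis computed over $K$ remains a Gr\"obner basis of the extended ideal over $L$ (via invariance of leading monomials and of the reduction sequences), and that the elimination theorem applies in this Ore setting because $\mathrm{LM}(PQ)=\mathrm{LM}(P)\mathrm{LM}(Q)$. What your version buys is an effective statement and an explicit verification of the base-change step; what the paper's main proof buys is self-containedness, requiring nothing beyond the existence of the cofactor representation. One small point worth making explicit in your write-up: you should note at the outset that $G\subseteq I$ and that $I\cap K[D_x]\subseteq I\cap L[D_x]$ is consistent with the claim being about a \emph{basis} in $K[D_x]$ of the full elimination ideal over $L$ --- your final sentence does establish this, so the proof stands as written.
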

\begin{proof}
  Let $P_1,\dots,P_m\in K[D_x,\vec{D_y}]$ be a basis of~$I$, and let $M$
  be an element in the elimination ideal $I\cap K[D_x]$. Then there
  exist $Q_1,\dots,Q_m\in L[D_x,\vec{D_y}]$ such that $M=Q_1P_1+\dots+Q_mP_m$.

  Clearly, $L$ can be viewed as a $K$-vector space, of potentially
  infinite dimension. In any case, there exists a finite-dimensional
  $K$-subspace~$V$ of $L$ that contains all the coefficients of the~$Q_i$
  (note that each $Q_i$ has only finitely many coefficients in~$L$).
  Now let $B_1,\dots,B_d$ be a $K$-basis of~$V$, which means that
  there are $Q_{i,j}\in K[D_x,\vec{D_y}]$ such that
  $Q_i=Q_{i,1}B_1+\dots+Q_{i,d}B_d$ for all~$i$. Hence we can write
  \begin{equation}\label{eq:Qij}
    M = \sum_{i=1}^m\biggl(\sum_{j=1}^d Q_{i,j}B_j\biggr)P_i
      = \sum_{j=1}^d\biggl(\sum_{i=1}^m Q_{i,j}P_i\biggr)B_j.
  \end{equation}

  Since the $B_j$ are linearly independent over~$K$, it follows that for
  each~$j$, the quantity $\sum_{i=1}^m Q_{i,j}P_i$ is free of~$D_{y_1},\dots,D_{y_n}$, because
  $M$ is free of $D_{y_1},\dots,D_{y_n}$ and because there cannot be a cancellation on the
  right-hand side of~\eqref{eq:Qij}. Therefore, the coefficients
  $\sum_{i=1}^m Q_{i,j}P_i$ are in $K[D_x]$, which proves the claim.
\end{proof}

For readers familiar with the theory of Gr\"obner bases, we offer the
following alternative proof: from a given basis of $I$ with elements
in $K[D_x,\vec{D_y}]$, we obtain a basis of $I\cap L[D_x]$ by computing a
Gr\"obner basis with respect to an elimination order. Since Buchberger's
algorithm never extends the ground field, the resulting basis must
be a subset of~$K[D_x]$.

\section{The Shape Lemma}

For an ideal $I\subseteq K[D_x,\vec{D_y}]$ of dimension zero, consider the
quotient $K[D_x,\vec{D_y}]/I$ as a $K[D_x]$-module. Since its dimension
as $K$-vector space is finite, this module must be cyclic~\cite[Prop. 2.9]{put03}.
If $M\in K[D_x,\vec{D_y}]$ is such that the equivalence class $[M]$ is a
generator of the module, then there is an $L\in K[D_x]$ such
that $L\cdot [M]=[LM]=[1]$.
Therefore, evaluating an integral
\[
  F(x)=\int_\Omega f(x,\vec y)\,d\vec y
\]
for a function $f(x,\vec y)$ whose ideal of annihilating operators is~$I$ is the
same as evaluating the integral
\[
  F(x)=\int_\Omega L\cdot g(x,\vec y)\,d\vec y
\]
where $g(x,\vec y)$ is defined as $M(f(x,\vec y))$. The choice of $M$ implies
that the annihilating ideal $J$ of $g(x,\vec y)$ has a basis of the form $\{D_{y_1} - Q_1,\dots, D_{y_n}-Q_n, P\}$
for some operators $P,Q_1,\dots,Q_n$ in $K[D_x]$.

Transforming $I$ to $J$ is known as gauge transform and can be considered as a
satisfactory solution to our problem: every ideal $I\subseteq K[D_x,\vec{D_y}]$ of dimension
zero can brought to an ideal $J$ to which the shape lemma applies by means of a
gauge transform. 

We shall propose an alternative approach here. Rather than applying a gauge transform,
which amounts to applying an operator to the integrand, our question is whether we
can also obtain an ideal basis of the required form by applying a linear change of
variables, i.e., using
\[
  F(x) = \int_\Omega f(x,\vec y)\,d\vec y = \int_{\tilde\Omega} f(x,y_1+c_1x,\dots,y_n+c_nx)\,d\vec y
\]
for some constants $c_1,\dots,c_n$ (and an appropriately adjusted integration range). It turns out
that this perspective leads to a shape lemma for differential operators that matches
more closely the situation in the commutative case. 

Note that
\[
L(x,y_1,\dots,y_n,D_x,D_{y_1},\dots,D_{y_n})\in K[D_x,D_y]
\]
is an annihilating operator of $f(x,\vec y+\vec cx)=f(x,y_1+c_1x,\dots,y_n+c_nx)$ if and only if
\[
L(x,y_1-c_1x,\dots,y_n-c_nx,D_x+c_1D_{y_1}+\cdots+c_nD_{y_n},D_{y_1},\dots,D_{y_n})
\]
is an annihilating operator of~$f(x,\vec y)$. In particular, the ideal
of annihilating operators of $f(x,\vec y)$ has dimension zero if and only this is the case for
the ideal of annihilating operators of $f(x,\vec y+\vec cx)$.

We shall show (Thm.~\ref{thm:changofvariables} below) that every
zero-dimensional left ideal of $K[D_x,\vec{D_y}]$ can be brought to normal position
by a change of variables $\vec y\leftarrow\vec y+\vec cx$. For the notion of being in
normal position, we propose the following definition.

\begin{defi}\label{def:normal}
  Let $I\subseteq K[D_x,\vec{D_y}]$ be an ideal of dimension zero,
  so that $r=\dim_K K[D_x,\vec{D_y}]/I$ is finite.
  The ideal $I$ is called \emph{in normal position} (w.r.t.~$D_x$) if for every
  choice of $C$-linearly independent solutions $f_1,\dots,f_r$
  we have $\Wr_x(f_1,\dots,f_r)\neq0$.
\end{defi}

\begin{ex}
  For the ideal $I=\<(D_x-1)(D_x-2),D_y>$ we have $r=2$. The
  solution space of $I$ is generated by $\exp(x)$ and~$\exp(2x)$.
  We have $\Wr_x(\exp(x),\exp(2x))=\exp(3x)$. Therefore, $I$ is
  in normal position w.r.t. $D_x$.
  However, as $D_y(\exp(x))=D_y(\exp(2x))=0$, we also
  have $\Wr_y(\exp(x),\exp(2x))=0$, so $I$ is \underline{not} in
  normal position w.r.t.~$D_y$.
\end{ex}

With this notion of being in normal position, we can state the following result. 

\begin{thm} (Shape Lemma; differential analog of Thms.~\ref{thm:2} and~\ref{thm:3})\label{thm:our23}
  Let $I\subseteq K[D_x,\vec{D_y}]$ be an ideal of dimension zero.
  Let $P$ be a generator of $I\cap K[D_x]$. Then the following conditions are equivalent:
  \begin{enumerate}
  \item $I$ is in normal position w.r.t. $D_x$
  \item $\ord(P)=\dim_K K[D_x,\vec{D_y}]/I$
  \item $K[D_x]/\<P>$ and $K[D_x,\vec{D_y}]/I$ are isomorphic as $K[D_x]$-modules.
  \item There are $Q_1,\dots,Q_n\in K[D_x]$ with $\ord(Q_i)<\ord(P)$ for all $i$
    such that $\{D_{y_1}-Q_1,\dots,D_{y_n}-Q_n,P\}$ is a basis of~$I$.
  \end{enumerate}
\end{thm}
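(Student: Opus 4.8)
The plan is to read all four conditions off the structure of the single $K[D_x]$-module $N:=K[D_x,\vec{D_y}]/I$. The key observation is that the annihilator of the class $[1]\in N$ inside $K[D_x]$ is exactly $I\cap K[D_x]=\<P>$, so the cyclic submodule $K[D_x]\cdot[1]$ is isomorphic to $K[D_x]/\<P>$ and has $K$-dimension $\ord(P)$; since it sits inside $N$ we always have the a priori bound $\ord(P)\le r$. I would then establish the equivalences along the cycle $(1)\Rightarrow(2)\Rightarrow(4)\Rightarrow(1)$, supplemented by the separate equivalence $(2)\Leftrightarrow(3)$.

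The equivalences $(2)\Leftrightarrow(3)$ and $(2)\Leftrightarrow(4)$ are purely module-theoretic. If $\ord(P)=r$, then $K[D_x]\cdot[1]$ already exhausts $N$ by the dimension bound above, so $[1]$ generates $N$ over $K[D_x]$ and $N\cong K[D_x]/\<P>$, which is $(3)$; conversely such an isomorphism forces $\dim_K N=\ord(P)$, i.e.\ $(2)$. For $(2)\Rightarrow(4)$ I would use that $[1]$ generates $N$ to write each $[D_{y_i}]$ as $[Q_i]$ with $Q_i\in K[D_x]$, reduce $Q_i$ modulo $P$ so that $\ord(Q_i)<\ord(P)$, and set $J=\<D_{y_1}-Q_1,\dots,D_{y_n}-Q_n,P>\subseteq I$. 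To conclude $J=I$ it suffices to bound $\dim_K K[D_x,\vec{D_y}]/J$ by $\ord(P)=r$: the $K$-span $W$ of the classes $[D_x^j]$ with $0\le j<\ord(P)$ is closed under left multiplication by $D_x$ (trivially) and by each $D_{y_i}$ (because $D_x^jD_{y_i}\equiv D_x^jQ_i\bmod J$ lands back in $K[D_x]$), so $W$ is a submodule containing the cyclic generator $[1]$ and hence equals the whole quotient. The converse $(4)\Rightarrow(2)$ is the same dimension bound read backwards.

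For $(1)\Rightarrow(2)$ I would argue with the Wronskian directly. Picking a $C$-basis $f_1,\dots,f_r$ of $V(I)$, normal position gives $\Wr_x(f_1,\dots,f_r)\neq0$, i.e.\ the columns of the $r\times r$ matrix $(D_x^{j-1}f_i)$ are linearly independent over $C((x,\vec y))$. On the other hand $P(f_i)=0$ lets one rewrite $D_x^{\ell}f_i$ for $\ell\ge\ord(P)$ as a $K$-linear combination of $f_i,\dots,D_x^{\ord(P)-1}f_i$, with coefficients (the normalized coefficients of $P$) that do not depend on $i$. Hence every column beyond the $\ord(P)$-th is a $K$-combination of the first $\ord(P)$ columns, which forces $r\le\ord(P)$; combined with $\ord(P)\le r$ this gives $(2)$.

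The remaining implication $(4)\Rightarrow(1)$ is where the analysis lives, and I expect it to be the main obstacle. After a change of variables I may assume the origin is an ordinary point. Given the basis in $(4)$, the relations $D_{y_i}f=Q_i(f)$ with $Q_i\in K[D_x]$ express every $\vec y$-derivative of a solution through its $x$-derivatives, so the restriction $f\mapsto f(x,\vec0)$ is injective on $V(I)$; and since $P\in K[D_x]$, the restriction $\varphi=f(x,\vec0)$ satisfies the ordinary differential equation $\bar P(\varphi)=0$ with $\bar P=P|_{\vec y=\vec0}$ of order $\ord(P)=r$. As $\dim_C V(I)=r$ by Thm.~\ref{thm:solspace} equals the dimension of the solution space of $\bar P$ at an ordinary point, the functions $f_1(x,\vec0),\dots,f_r(x,\vec0)$ form a fundamental system and therefore have nonzero univariate Wronskian; since $\Wr_x(f_1,\dots,f_r)|_{\vec y=\vec0}=\Wr_x(f_1(x,\vec0),\dots,f_r(x,\vec0))$, the multivariate Wronskian cannot vanish, which is normal position. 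The delicate points I would have to justify are that the origin can be taken ordinary for $\bar P$ (no poles of the $Q_i$ and nonvanishing leading coefficient there) and that the Pfaffian system $D_{y_i}=Q_i$ is integrable, so that a solution is genuinely recovered from its restriction to $\vec y=\vec0$.
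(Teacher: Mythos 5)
Most of your proof coincides with the paper's. Your a priori bound $\ord(P)\le r$ via the embedding $K[D_x]/\<P>\hookrightarrow K[D_x,\vec{D_y}]/I$, your treatment of $(2)\Leftrightarrow(3)$, and your $(2)\Leftrightarrow(4)$ are essentially the paper's arguments (your verification that the span $W$ of $[1],\dots,[D_x^{r-1}]$ is a submodule of $K[D_x,\vec{D_y}]/J$, so that $J=I$, is actually spelled out more carefully than in the paper, which only asserts "the existence of a $Q$"). Your $(1)\Rightarrow(2)$ by rank-deficiency of the Wronskian matrix (you say "columns" where you mean "rows": the rows $D_x^{\ell}f_i$ with $\ell\ge\ord(P)$ are $K$-combinations of the first $\ord(P)$ rows) is a mild variant of the paper's argument and is fine; like the paper, it needs Thm.~\ref{thm:solspace} to guarantee that $r$ independent solutions exist at all.

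The genuine gap is in your $(4)\Rightarrow(1)$, which is exactly the implication where you depart from the paper. Your argument needs (i) $\lc(P)$ to be nonzero and the coefficients of the $Q_i$ to be pole-free along $\vec y=\vec0$ (so that restriction $f\mapsto f(x,\vec 0)$ is injective on $V(I)$ and $\ord(\bar P)=\ord(P)$), and (ii) a recovery of $f$ from $f(x,\vec 0)$ via the Pfaffian system. You flag both points but do not close them, and the standard escape "after a change of variables I may assume the origin is ordinary" is not available here: normal position (Def.~\ref{def:normal}) is a statement about the Wronskian of the power series solutions \emph{at the origin}, so moving the base point changes both the ideal and the solution space whose Wronskian you must control, and you would still have to transport the conclusion back. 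The paper avoids all of this by instead proving $(2)\Rightarrow(1)$ purely algebraically: from a hypothetical $C((\vec y))$-linear dependence among $f_1,\dots,f_r$ it builds the order-$(r-1)$ operator $Q$ as a Wronskian-type determinant with last column $1,D_x,\dots,D_x^{r-1}$, shows $Q$ lies in the extension of $I$ to $C((x,\vec y))[D_x,\vec{D_y}]$ by a solution-space dimension count, and then uses Lemma~\ref{lem:elim} to conclude that the elimination ideal over $C((x,\vec y))$ is still generated by $P$, contradicting $\ord(P)=r>\ord(Q)$. I recommend you either adopt that route (your cycle then closes as $(4)\Rightarrow(2)\Rightarrow(1)$, both of which you already have the ingredients for) or supply a genuine proof of the two analytic facts you listed, which is more work than the remark suggests.
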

\begin{proof}
  Let $r=\dim_K K[D_x,\vec{D_y}]/I$.

  $1\Rightarrow 2$\quad
  To show that $\ord(P)=r$, suppose that $\ord(P)<r$ and let $f_1,\dots,f_r$ be
  some $C$-linearly independent solutions of~$I$. By Thm.~\ref{thm:solspace},
  we may assume that such solutions exist. 
  As no more than $\ord(P)$ solutions of $P$ can be linearly independent over
  $C[[\vec y]]$, it follows that $f_1,\dots,f_r$ are linearly dependent over~$C[[\vec y]]$.
  This implies $\Wr_x(f_1,\dots,f_r)=0$, in contradiction to the assumption that
  $I$ is in normal position.

  $2\Rightarrow 1$\quad
  Let $f_1,\dots,f_r$ be $C$-linearly independent solutions of~$I$.
  We have to show that they are also linearly independent over~$C((\vec y))$.
  Suppose otherwise. Then we may assume that $f_r$ is a $C((\vec y))$-linear combination
  of $f_1,\dots,f_{r-1}$. The operator
  \[
  Q = \begin{vmatrix}
    f_1 & \cdots & f_{r-1} & 1\\
    D_x(f_1) & \cdots & D_x(f_{r-1}) & D_x \\
    \vdots & & \vdots & \vdots \\
    D_x^{r-1}(f_1) & \cdots & D_x^{r-1}(f_{r-1}) & D_x^{r-1}
    \end{vmatrix} \in C((x,\vec y))[D_x]
  \]
  has the solutions $f_1,\dots,f_{r-1}$ and~$f_r$.
  It must therefore belong to the ideal generated by $I$ in the larger ring $C((x,\vec y))[D_x,\vec{D_y}]$,
  for if it didn't, then $\dim_{C((x,\vec y))} C((x,\vec y))[D_x,\vec{D_y}]/(\<I>+\<Q>)<r$, which
  is impossible when the solution space has $C$-dimension~$r$.

  By Lemma~\ref{lem:elim}, $P$~is also a generator of the elimination ideal
  $\<I>\cap C((x,\vec y))[D_x]$, where $\<I>$ denotes the ideal generated by $I$
  in $C((x,\vec y))[D_x,\vec{D_y}]$. By assumption we have $\ord(P)=r>\ord(Q)$. This is
  a contradiction. 
  
  $2\Rightarrow 3$\quad
  Consider the function $\phi\colon K[D_x]/\<P>\to K[D_x,\vec{D_y}]/I$ defined by
  $\phi([L]_{\<P>}):=[L]_I$.
  This function is well-defined because $\<P>\subseteq I$.
  The function is obviously a morphism of $K[D_x]$-modules, and it is injective,
  because if $L\in K[D_x]$ is such that $[L]_I=[0]_I$, then $L\in I$,
  so $L\in I\cap K[D_x]=\<P>$, so $[L]_{\<P>}=0$.
  Being a morphism of $K[D_x]$-modules, $\phi$ is in particular a morphism of $K$-vector spaces.
  Therefore, since $\dim_K K[D_x]/\<P>=r=\dim_K K[D_x,\vec{D_y}]/I$ by assumption,
  injectivity implies bijectivity, and therefore $\phi$ is an isomorphism.

  $3\Rightarrow 2$\quad clear.

  $2\Rightarrow 4$\quad
  By assumption, the elements $[1],[D_x],\dots,[D_x^{r-1}]$ of $K[D_x,\vec{D_y}]/I$ are $K$-linearly
  independent and therefore form a vector space basis of $K[D_x,\vec{D_y}]/I$.
  Therefore, the element $[D_y]$ of $K[D_x,\vec{D_y}]/I$ can be expressed as a
  $K$-linear combination of $[1],[D_x],\dots,[D_x^{r-1}]$. This implies
  the existence of a~$Q$.

  $4\Rightarrow 2$\quad 
  By repeated addition of suitable multiples of basis elements, it can be seen
  that every element of $K[D_x,\vec{D_y}]$ is equivalent modulo $I$ to an element
  of the form $q_0+q_1D_x+\cdots+q_{r-1}D_x^{r-1}$. Therefore, the elements $[1],\dots,[D_x^{r-1}]$
  generate $K[D_x,\vec{D_y}]/I$ as a $K$-vector space. This implies
  $\dim_K K[D_x,\vec{D_y}]/I\leq r$.
  At the same time, the dimension cannot be smaller than~$r$, because if
  $[1],\dots,[D_x^{r-1}]$ were $K$-linearly dependent, then $I\cap K[D_x]$ would
  contain an element of order less than $\ord(P)$, which is impossible by the choice of~$P$.
\end{proof}

Again, readers familiar with the theory of Gr\"obner bases will have no difficulty
finding shorter arguments for some of the implications. 

The similarity of Thm.~\ref{thm:our23} to the corresponding theorems for
commutative polynomial rings is evident, but there are some subtle differences
as well. One difference is that Thms.~\ref{thm:2} and~\ref{thm:3} require the ideal
to be radical, while no such assumption is needed for Thm.~\ref{thm:our23}.

However, it turns out that in order to also generalize Thm.~\ref{thm:1} to
differential operators, we do need to introduce a restriction. Note that
Thm.~\ref{thm:1} becomes wrong for non-radical ideals if we interpret their
solutions as points with multiplicities. Indeed, in this sense, a non-radical
ideal is never in normal position, and no linear change of variables will
suffice to turn a non-radical ideal into a radical ideal.

Ideals of differential operators cannot have multiple solutions (cf.~Thm.~\ref{thm:solspace}).
Instead, it seems appropriate to adopt the following notion. 

\begin{defi}\label{def:D-radical}
  A finite dimensional $C$-vector space $V$ is called \emph{linearly disjoint}
  with~$K$ (over~$C$) if $\dim_{K} K \otimes_{C} V = \dim_{C} V$, or equivalently,
  if any $C$-basis of $V$ is $K$-linearly independent. A zero-dimensional ideal
  $I \subseteq K[D_{x},\vec{D_{y}}]$ is called \emph{D-radical} if its solution
  space~$V(I)$ is linearly disjoint with~$K$.
\end{defi}

Observe the difference between Defs.~\ref{def:D-radical} and~\ref{def:normal}.
In both cases we require the absence of linear relations, but with respect to
different coefficient domains. For normal position, the coefficients must be
free of $x$ but can depend in an arbitrary way on~$\vec y$, and for D-radical
the coefficients must be rational functions in $x$ and~$\vec y$.

\begin{ex}\label{ex:D-radical}\
  \begin{enumerate}
  \item Let $I=\<(D_x-1)^2,D_y> \subseteq K[D_{x},D_{y}]$, then $V(I)$ contains the
    $C$-linearly independent solutions $\exp(x)$ and $x\exp(x)$.
    As these are not linearly independent over~$K$, the ideal~$I$ is not D-radical.
  \item The solution space of the ideal $\<(D_x-1)(D_x-2),D_y>$ has the basis
    $\{\exp(x),\exp(2x)\}$. Since $\exp(x)$ and $\exp(2x)$ are linearly independent
    over $K=C(x)$, the ideal is D-radical.
  \end{enumerate}
\end{ex}

Note that in both instances of Example~\ref{ex:D-radical} the generators of
the ideal actually lie in the commutative ring $C[D_{x},D_{y}]$. We observe
that the corresponding ideal in $C[D_{x},D_{y}]$ is radical (in the
commutative sense) in Case~(2), but not radical in Case~(1). This is not a
coincidence.  Our definition of D-radicality specializes to the classical
concept of radicality when differential operators with constant coefficients
are considered, which justifies the choice of the name. More precisely: for a
zero-dimensional ideal $I \subseteq C[D_{x},\vec{D_{y}}]$ we have that $I$ is
D-radical if and only if it is radical (in the commutative sense). This can be
understood by looking at the closed-form solutions of such
constant-coefficient differential equations: the solution space of the
operator $(D_x-\alpha_1)\cdots(D_x-\alpha_r)$ is spanned by
$\exp(\alpha_1x),\dots,\exp(\alpha_rx)$, which are linearly independent over~$K$,
whenever the $\alpha_i$ are pairwise disjoint.
In contrast, the solution space of the operator $(D_x-\alpha)^r$ is spanned
by $\exp(\alpha x),x\exp(\alpha x),\dots,x^{r-1}\exp(\alpha x)$, which are
clearly linearly dependent over~$K$. This argument applies analogously to the
situation of several variables.

The correspondence between radical and D-radical
also extends to Theorem~\ref{thm:changofvariables} below, which reduces to
Theorem~\ref{thm:1} for ideals generated by operators with constant coefficients.
In particular, the change of variable $\vec{y} \leftarrow \vec{y} + \vec{c}x$ keeps $D_{y_{1}}, \dots, D_{y_{n}}$ unchanged, and replaces $D_{x}$ with $D_{x} + c_{1}D_{y_{1}} + \dots + c_{n}D_{y_{n}}$, and thus the theorem yields a change of variable with the same structure as Prop.~\ref{thm:1}.

\begin{thm} (Differential analog of Thm.~\ref{thm:1})\label{thm:changofvariables}
  Let $I\subseteq K[D_x,\vec{D_y}]$ be a zero-dimensional D-radical ideal.
  Then there are constants $c_1,\dots,c_n\in C$ such that the ideal $J$ obtained
  from $I$ by applying the linear change of variables $\vec y\leftarrow\vec y+\vec cx$
  (where $\vec c=(c_1,\dots,c_n)$) is in normal position w.r.t.~$D_x$.
\end{thm}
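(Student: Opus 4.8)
The plan is to reduce the statement to the nonvanishing of a single Wronskian and then apply a genericity argument. After moving an ordinary point to the origin (possible by the remark following Thm.~\ref{thm:solspace}), fix a $C$-basis $f_1,\dots,f_r$ of the solution space $V(I)$, where $r=\dim_K K[D_x,\vec{D_y}]/I$. Put $\theta_{\vec c}:=D_x-c_1D_{y_1}-\dots-c_nD_{y_n}$. A short chain-rule computation shows that $g_i(x,\vec y):=f_i(x,\vec y-\vec cx)$ satisfy $D_x^k g_i=(\theta_{\vec c}^{\,k}f_i)(x,\vec y-\vec cx)$ and form a $C$-basis of $V(J)$; hence $\Wr_x(g_1,\dots,g_r)$ is the image of $\det\big(\theta_{\vec c}^{\,k}f_i\big)_{0\le k<r}$ under the invertible substitution $\vec y\mapsto\vec y-\vec cx$. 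By Def.~\ref{def:normal} together with the Wronskian criterion recalled just before Lem.~\ref{lem:elim}, the ideal $J$ is in normal position w.r.t.~$D_x$ exactly when this determinant, which I abbreviate $W_{\vec c}:=\det(\theta_{\vec c}^{\,k}f_i)_{0\le k<r}$, is nonzero. So it suffices to find $\vec c\in C^n$ with $W_{\vec c}\ne0$.

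Since the $f_i$ do not involve $\vec c$, each $\theta_{\vec c}^{\,k}f_i$ is a polynomial in $c_1,\dots,c_n$ with coefficients in the solution field, so $W_{\vec c}$ is a polynomial in $\vec c$ over that field. Because $C$ has characteristic zero it is infinite; hence, once $W_{\vec c}$ is known to be a nonzero polynomial, it cannot vanish at every point of $C^n$, and any $\vec c$ outside its zero locus does the job. (This is the exact analogue of the hypothesis $|K|>\binom t2$ in Thm.~\ref{thm:1}: in the constant-coefficient case the bad $\vec c$ form precisely $\binom r2$ hyperplanes, as I note below.) Everything thus reduces to the single assertion that $W_{\vec c}$ is not the zero polynomial, and this is the only place where D-radicality is needed.

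To see the role of D-radicality it helps to reformulate on the module side. Applying Thm.~\ref{thm:our23} (equivalence of (1) and (2)) to $J$, together with the change-of-variables isomorphism $K[D_x,\vec{D_y}]/J\cong K[D_x,\vec{D_y}]/I$ carrying $[D_x]$ to $[\theta_{\vec c}]$, one finds that $W_{\vec c}\equiv0$ identically in $\vec c$ if and only if $[1],[\theta_{\vec c}],\dots,[\theta_{\vec c}^{\,r-1}]$ are $K$-linearly dependent in $K[D_x,\vec{D_y}]/I$ for every $\vec c$; equivalently, no $\vec c$ makes $[1]$ a cyclic vector for the map $[M]\mapsto[\theta_{\vec c}M]$. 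On the solution side this says that the $f_i$ collapse to fewer than $r$ functions over the field of $\theta_{\vec c}$-constants for every $\vec c$. The prototype of such a collapse is a confluent family $\exp(\lambda x),x\exp(\lambda x),\dots$ as in Example~\ref{ex:D-radical}(1), and these are exactly the $K$-linearly dependent solution sets that D-radicality forbids. The plan is therefore to pass to a generic (transcendental) $\vec c$ and to show that linear disjointness of $V(I)$ with $K$ prevents $[1]$ from being trapped in a proper $\theta_{\vec c}$-stable subspace, so that $[1]$ becomes cyclic and $W_{\vec c}\ne0$.

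I expect this nonvanishing to be the main obstacle. Expanding $W_{\vec c}$ by the Cauchy--Binet formula gives $W_{\vec c}=\sum_S \kappa_S\,\vec c^{\,\gamma(S)}\det(D^{\beta}f_i)_{\beta\in S}$, the sum ranging over sets $S$ of one derivative monomial of each order $0,1,\dots,r-1$, with nonzero rational constants $\kappa_S$ and $\gamma(S)$ the total $\vec{D_y}$-multidegree of $S$; the task is to exhibit one surviving coefficient. The delicate point is the cancellation among the generalized Wronskians $\det(D^{\beta}f_i)_{\beta\in S}$ sharing a common monomial $\vec c^{\,\gamma(S)}$: plain $C$-linear independence of the $f_i$ only guarantees, by Kolchin's generalized-Wronskian criterion, a nonzero generalized Wronskian of \emph{some} order pattern, not necessarily one with the strictly increasing orders $0,1,\dots,r-1$ occurring here. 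Linear disjointness over $K$ is strictly stronger, and the crux is to show that it excludes the confluent degeneracies and thereby forces a Vandermonde-type factor — the separation of the ``exponents'' of $\theta_{\vec c}$ — that cannot cancel. In the constant-coefficient case this is transparent: the solutions are exponentials $\exp(\ell_i)$ of distinct linear forms $\ell_i=\lambda^{(i)}_0x+\sum_k\lambda^{(i)}_ky_k$, the operator $\theta_{\vec c}$ scales $\exp(\ell_i)$ by $\mu_i=\lambda^{(i)}_0-\sum_k c_k\lambda^{(i)}_k$, and $W_{\vec c}$ is the product of the $f_i$ times the Vandermonde determinant in the $\mu_i$, which is a nonzero polynomial in $\vec c$ precisely because D-radicality makes the $\lambda^{(i)}$ pairwise distinct — recovering Thm.~\ref{thm:1} and its $\binom r2$ bad hyperplanes. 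Establishing the analogous nonvanishing for variable coefficients, where the exponents are no longer constants, is the step I would expect to require the most care.
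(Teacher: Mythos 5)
Your reduction is sound as far as it goes: after fixing a $C$-basis $f_1,\dots,f_r$ of $V(I)$, the theorem is indeed equivalent to the assertion that $W_{\vec c}=\det(\theta_{\vec c}^{\,k}f_i)_{0\le k<r}$ does not vanish identically in $\vec c$, and since $C$ is infinite this would give a good choice of $\vec c$. The problem is that you never prove this assertion. You correctly identify it as ``the main obstacle'' and ``the step I would expect to require the most care,'' you verify it only in the constant-coefficient case (where the Vandermonde factorization is available), and your proposed route for the general case --- Cauchy--Binet expansion into generalized Wronskians plus an argument that D-radicality ``forces a Vandermonde-type factor that cannot cancel'' --- is left entirely as a plan. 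Since this is the whole content of the theorem, the proposal has a genuine gap rather than a complete proof by a different method.

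For comparison, the paper proves exactly the contrapositive of your missing step, and does so not by analyzing the coefficients of $W_{\vec c}$ but by manipulating the linear dependence that its vanishing produces. If $\Wr_x\bigl(f_i(x,\vec y+\vec cx)\bigr)_i=0$ for all $\vec c$, the Wronskian criterion gives $\sum_i p_i(\vec y,\vec c)f_i(x,\vec y+\vec cx)=0$ with $p_i\in C((\vec y,\vec c))$ not all zero. A rank argument on the (infinite) matrix of $x$-series coefficients --- which are \emph{polynomials} in $\vec c$ over $C((\vec y))$ --- shows the $p_i$ may be taken in $C[[\vec y]][\vec c]$. The decisive trick is then the pair of substitutions $\vec c\leftarrow\vec c-\vec y/x$ (which turns the functions into $f_i(x,\vec cx)$ and moves all $\vec y$-dependence into the coefficients), followed by extraction of a single nonzero $\vec y$-coefficient, followed by $\vec c\leftarrow\vec y/x$; this produces a nontrivial relation $\sum_i q_i(\vec y/x,x)f_i(x,\vec y)=0$ with coefficients in $K=C(x,\vec y)$, contradicting D-radicality. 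Nothing resembling this mechanism (or any other mechanism converting a dependence over the $\theta_{\vec c}$-constants into a dependence over $K$) appears in your writeup, so the argument as proposed does not close.
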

\begin{proof}
  We show that whenever $f_1(x,\vec y),\dots,f_r(x,\vec y)$ form a $C$-basis of $V(I)$ such that 
  $\Wr_x\bigl(f_i(x,\vec y+\vec cx)\bigr)_{i=1}^r=0$ for all $\vec c\in C^n$,
  then $f_1,\dots,f_r$ are $K$-linearly dependent.
  Since the ideal is D-radical, this implies that $f_1,\dots,f_r$ are
  $C$-linearly dependent, which is a contradiction.

  Consider $c_1,\dots,c_n$ as an additional variables and recall that the
  assumption $\Wr_x\bigl(f_i(x,\vec y+\vec cx)\bigr)_{i=1}^r=0$
  implies that the $f_i(x,\vec y+\vec cx)$ are linearly dependent over the constant
  field with respect to~$x$, i.e., $C((\vec y,\vec c))$-linearly dependent: thus we can
  assume that there exist $p_1,\dots,p_r\in C((\vec y,\vec c))$, not all $0$, such that
  \begin{equation}\label{eq:lincomb_fi}
    \sum_{i=1}^r p_i(\vec y,\vec c)\cdot f_i(x,\vec y+\vec cx) = 0.
  \end{equation}
  Each $f_i$ has an expansion as a series in~$x$:
  \[
    f_i(x,\vec y+\vec cx) = \sum_{j=0}^\infty \ \underbrace{\frac{1}{j!}
      \frac{\partial^j f_i(x,\vec y+\vec cx)}{\partial x^j}\bigg|_{x=0}}_%
    {\textstyle =: f_{i,j}(\vec y,\vec c)}
    \cdot\ x^j.
  \]
  Note that the series coefficients~$f_{i,j}$ are polynomials in~$\vec c$, because
  \[
    \frac{\partial^j f_i(x,y+cx)}{\partial x^j}\bigg|_{x=0} =
    \sum_{k_0+\cdots+k_n=j} \binom{j}{k_0,\dots,k_n}
    \cdot \underbrace{\biggl[
        f_i^{(k_0,\dots,k_n)}(x,\vec y+\vec cx)\biggr]_{x=0}}_{\in C((\vec y))}
    \cdot c_1^{k_1}\cdots c_n^{k_n}
    \in C((\vec y))[\vec c].
  \]
  Here the notation $f_i^{(k_0,\dots,k_n)}$ refers to $k_0$-fold derivative w.r.t. the first argument, the $k_1$-fold
  derivative with respect to the second argument, etc.
  
  It follows that Eq.~\eqref{eq:lincomb_fi} can be expanded as 
  \[
    \sum_{j=0}^\infty \left(
      \sum_{i=1}^r p_i(\vec y,\vec c) \cdot f_{i,j}(\vec y,\vec c) \right) \cdot x^j = 0
  \]
  and therefore, for all $j \in \set N$, $\sum_{i=1}^r p_i(\vec y,\vec c) \cdot f_{i,j}(\vec y,\vec c)=0$.

  Let $M$ be the matrix $\bigl(f_{i,j}(\vec y,\vec c)\bigr)_{j\geq 0,1\leq i\leq r}$
  with infinitely many rows and $r$ columns.
  From the above, 
  \[
    \bigl(p_i(\vec y,\vec c)\bigr)_{i=1}^r \in \ker M,
  \]
  and therefore $M$ is rank-deficient; let $R<r$ denote the rank of~$M$. Hence
  there exists an integer $m\in\set{N}$ such that the rank of the
  $(m\times r)$-submatrix~$M'$, that is obtained by taking the first $m$ rows
  of~$M$, is also equal to~$R$. It follows that $\ker(M')=\ker(M)$, and since
  $M'\in C((\vec y))[\vec c]^{m\times r}$ we have that $\ker(M')$ is a subspace of
  $C((\vec y))(\vec c)^r$. Therefore, the coefficients $p_i(\vec y,\vec c)$ can be chosen in
  $C((\vec y))(\vec c)$. In fact, by clearing denominators, we can even assume them to
  belong to $C[[\vec y]][\vec c]$.

  Now perform the substitution $\vec c\leftarrow\vec c-\vec y/x$ in~\eqref{eq:lincomb_fi} to get
  \begin{equation}
    \label{eq:4}
    \sum_{i=1}^r p_i(\vec y,\vec c-\vec y/x)\cdot f_i(x,\vec cx) = 0.
  \end{equation}
  Each $p_i(\vec y,\vec c-\vec y/x)$ admits an expansion as a power series in $y_1,\dots,y_n$
  \begin{equation}
    \label{eq:2}
    p_{i}(\vec y, \vec c-\vec y/x) = \sum_{j_1,\dots,j_n=0}^\infty q_{i,j_1,\dots,j_n}(\vec c,x) y_1^{j_1}\cdots y_n^{j_n}.
  \end{equation}
  Eq.~\eqref{eq:4} then expands as
  \begin{equation}
    \label{eq:10}
    \sum_{j_1,\dots,j_n=0}^\infty\big(\sum_{i=1}^r q_{i,j}(\vec c,x)\cdot f_i(x,\vec cx)\big) y_1^{j_1}\cdots y_n^{j_n} = 0
  \end{equation}
  and therefore, for all $j_1,\dots,j_n\in\set Z$, 
  \begin{equation}
    \label{eq:12}
    \sum_{i=1}^r q_{i,j}(\vec c,x)\cdot f_i(x,\vec cx) = 0.
  \end{equation}
  Since the $p_i$ are not all~$0$, there must exist $i,j_1,\dots,j_n$ with $q_{i,j_1,\dots,j_n}\neq 0$,
  and therefore for such a choice of $j_1,\dots,j_n$, the left-hand side of Eq.~\eqref{eq:12} is
  a non-trivial linear combination.
  
  Furthermore, observe that since the $p_i$ are rational in their second
  argument, the coefficients $q_i$ are rational functions.
  So finally, substituting $\vec c\leftarrow\vec y/x$ yields the desired dependency with coefficients in~$K$:
  \[
    \sum_{i=1}^r q_i(\vec y/x,x) f_i(x,\vec y) = 0.\qedhere
  \]
\end{proof}

\begin{ex}
  The annihilator~$I_1$ of $\exp(x)$ and $y\exp(x)$ is not D-radical. The annihilator~$I_2$
  of $\exp(x)$ and $\exp(x+y)$ is D-radical but not in normal position w.r.t.~$D_x$.
  Setting $y$ to $y+cx$ in~$I_1$ gives the annihilator of $\exp(x),(y+cx)\exp(x)$,
  which is still not D-radical. However, setting $y$ to $y+cx$ in~$I_2$ gives
  the annihilator of $\exp(x),\exp((1+c)x+y)$, which is in normal
  $\partial_x$-position for every choice $c\neq0$.
\end{ex}

Recall that our motivation was the computation of an ideal of annihilating operators for an integral
\begin{equation*}
  \int_{\Omega}  f(x,\vec{y})\,d\vec{y}
\end{equation*}
where $f(x,\vec y)$ is a D-finite function.
Let $I \subseteq K[D_{x}, \vec{D_{y}}]$ be the annihilating ideal of $f(x,\vec{y})$, and assume that it is D-radical.
According to Theorem~\ref{thm:changofvariables}, there exists $\vec{c}$ such that, after the change of
variables $\vec{y} \leftarrow \vec{y} + \vec{c}x$, the ideal $I$ is in normal position.
According to the Shape Lemma~\ref{thm:our23}, this implies that, after change of variables, the ideal $I$ is generated
by $P$, and $D_{y_1}-Q_1,\dots,D_{y_n}-Q_n$, for certain $P, Q_1,\dots,Q_n\in K[D_{x}]$.
This means that $g(x,\vec{y}) = f(x,\vec{y} + \vec{c}x)$ is such that $P(g)=0$
and $D_{y_i}(g)=Q_i(g)$ for all~$i$, and we can use creative telescoping to compute an annihilating ideal $J$ for 
\begin{equation*}
  \int_{\tilde{\Omega}}  g(x,\vec{y})\,d\vec{y} = \int_{\Omega} f(x,\vec{y})\,d\vec{y},
\end{equation*}
where $\tilde{\Omega}$ is the inverse image of $\Omega$ under the change of variables.

It remains open how these results extend to the recurrence case. While the restriction to the differential case does not
seem essential for the Shape Lemma itself (Theorem~\ref{thm:our23}), there is a substantial difference as far as the
effect of a linear change of variables $y_i\leftarrow y_i+c_ix$  on annihilating operators is concerned: While $D_x$ gets
replaced by $D_x+c_1D_{y_1}+\cdots+c_nD_{y_n}$ in the differential case, the shift operator $S_x$~would have to be replaced
by $S_xS_{y_1}^{c_1}\cdots S_{y_n}^{c_n}$. Even if we restrict $c_1,\dots,c_n$ to nonnegative integers in order to make this
meaningful, it is not clear how Theorem~\ref{thm:changofvariables} could be adapted to this situation. 

\medskip
\noindent\textbf{Acknowledgement.} We thank the anonymous referee for his or her valuable comments, in particular a suggested
change of notation that we initially were skeptic about but that indeed worked out more smoothly than we had expected. 

\bibliographystyle{plain}
\bibliography{bib}

\end{document}